\documentclass[11pt]{article}
\usepackage{fullpage}
\usepackage{authblk}
\usepackage{graphicx}
\usepackage{amsmath}
\usepackage{amsfonts}
\usepackage{amssymb}
\usepackage{amsthm}
\usepackage{mathrsfs}
\usepackage{bm}
\usepackage{mathtools}
\usepackage{enumitem}

\usepackage{url}
\usepackage{hyperref}

\usepackage{caption}
\usepackage{subcaption}
\captionsetup{compatibility=false} %This is to fix the issue that 'subcaption' package does not work correctly.

\newtheorem{theorem}{Theorem}%[section]
\newtheorem{lemma}[theorem]{Lemma}

\newtheorem{corollary}[theorem]{Corollary}

\newtheorem*{claim*}{Claim}
\newtheorem*{remark*}{Remark}

\newtheorem*{definition*}{Definition}

%\numberwithin{theorem}{section}
%\numberwithin{figure}{section}

\makeatletter
\renewcommand\section{%
  \@startsection{section}{1}
                {\z@}%
                {-3.5ex \@plus -1ex \@minus -.2ex}%
                {2.3ex \@plus.2ex}%
                {\large\bfseries}% 11pt
}
\renewcommand\subsection{%
  \@startsection{subsection}{2}
                {\z@}%
                {-3.25ex\@plus -1ex \@minus -.2ex}%
                {1sp}% No space after subsections
                {\normalsize\bfseries}% normal size, boldface
}
\renewcommand\subsubsection{%
  \@startsection{subsubsection}{3}
                {\z@}%
                {-3.25ex\@plus -1ex \@minus -.2ex}%
                {1sp}% No space after subsubsections
                {\normalfont\normalsize}% normal size, medium
}
\makeatother

\marginparwidth 0pt \oddsidemargin 0pt \evensidemargin 0pt
\topmargin 30pt \textheight 21.0 truecm \textwidth 16.5 truecm

\title{{\LARGE\bf  Population Monotonicity in Matching Games}
\footnote{Supported in part by National Natural Science Foundation of China (Nos.\,12001507, 11971447 and 11871442) and Natural Science Foundation of Shandong (No.\,ZR2020QA024).}}

\author{Han Xiao
%\thanks{Han Xiao is supported in part by National Natural Science Foundation of China (No.\,12001507) and Natural Science Foundation of Shandong (No.\,ZR2020QA024).}
and Qizhi Fang
%\thanks{Qizhi Fang is supported in part by National Natural Science Foundation of China (Nos.\,11871442 and 11971447).}
}

\affil{School of Mathematical Sciences\\Ocean University of China\\Qingdao, China\\ \{hxiao,~qfang\}$@$ouc.edu.cn}

\date{}
\begin{document}

%\numberwithin{equation}{section}

\maketitle

\openup 1.2\jot

%\hfill

\begin{abstract}
A matching game is a cooperative profit game defined on an edge-weighted graph,
where the players are the vertices and the profit of a coalition is the maximum weight of matchings in the subgraph induced by the coalition.
A population monotonic allocation scheme is a collection of rules defining how to share the profit among players in each coalition such that every player is better off when the coalition expands.
In this paper, we study matching games and provide a necessary and sufficient characterization for the existence of population monotonic allocation schemes.
Our characterization also implies that whether a matching game admits population monotonic allocation schemes can be determined efficiently.

\hfill

\noindent\textbf{Keywords:} cooperative game theory, matching game, population monotonic allocation scheme.

\noindent\textbf{Mathematics Subject Classification:}  05C57, 91A12, 91A43, 91A46.
% 05C57 Combinatorics - Games on graphs
% 91A12 Game Theory - Cooperative games
% 91A43 Game Theory - Games involving graphs
% 91A46 Game Theory - Combinatorial games

\noindent\textbf{JEL Classifcation: } C71, C78.
% JEL: C7 - Game theory and Bargaining theory
% JEL: C70 - General
% JEL: C71 - Cooperative games
% JEL: C72 - Noncooperative games
% JEL: C73 - Stochastic and Dynamic games; Evolutionary games; Repeated Games
% JEL: C78 - Bargaining theory; Matching theory
% JEL: C79 - Other
\hfill

\hfill
\end{abstract}

\section{Introduction}
Matching games, which capture matching markets with transferable utilities, make one of the cornerstones in cooperative game theory.
Roughly speaking, a matching game is a cooperative profit game defined on an edge-weighted graph,
where the players are the vertices and the profit of a coalition is the maximum weight of matchings in the subgraph induced by the coalition.
The following setting, taken form \cite{BKP12, EK01, Vazi21}, vividly illustrates the underlying scenario of matching games.
Consider a group of tennis players that will participate in a doubles tournament.
To represent the underlying structure, we introduce a weighted graph $G = (V, E;w)$.
The vertices are the players,
an edge $ij$ represents that players $i$ and $j$ are compatible doubles partners,
and $w: E\rightarrow \mathbb{R}_{+}$ is a function where $w_{ij}$ represents the expected prize money if $i$ and $j$ partner up in the tournament.
The total prize money for any subgroup $S\subseteq V$ of players in the doubles tournament is the maximum weight of matchings in the edge-weighted graph induced by $S$.
In particular, a matching game is called a \emph{simple matching game} when $w\equiv1$ and called an \emph{assignment game} when $G$ is bipartite, respectively.

An essential issue in a cooperative profit game is how to distribute the total profit among the players in a coalition.
There are many criteria for evaluating how ``good'' an allocation is, such as stability,  fairness, and satisfaction.
Emphases on different criteria lead to different allocation concepts, e.g., the core, the Shapley value, and the nucleolus.
Various allocation concepts have been studied extensively and intensively for matching games.

The core, which addresses the issue of stability for the grand coalition, concerns one of the most attractive allocations.
Shapley and Shubik \cite{SS71} show that the core of assignment games is always non-empty.
Deng et al. \cite{DIN99} provide a complete characterization for the core non-emptiness of matching games.
Eriksson and Karlander \cite{EK01} initialize the study on extreme core allocations for matching games.
N{\'u}{\~n}ez and Rafels \cite{NR03} achieve a complete characterization on extreme core allocations for assignment games.
Toda \cite{Toda05} propose an axiomatic characterization for the core of assignment games.
Klaus and Nichifor \cite{KN10} investigate the relation of the core with other allocation concepts for matching games.
Recently, Vazirani \cite{Vazi21} studies the approximate core and achieve the best possible approximation factor.

The nucleolus, which maximizes the minimum satisfaction among all players, is another well-studied allocation.
Solymosi and Raghavan \cite{SR94} propose an efficient algorithm for computing the nucleolus of assignment games.
Kern and Paulusma \cite{KP03} introduce an efficient algorithm for computing the nucleolus of simple matching games.
Llerena et al. \cite{LNR15} characterize the nucleolus by properties of the core and the kernel for assignment games.
Recently, K\"{o}nemann et al. \cite{KPT20} show that the nucleolus for matching games can be computed efficiently, which resolves an outstanding open problem proposed by Faigle et al. \cite{FKFH98}.

In addition to allocations, convexity is also a desirable property to study in cooperative profit games,
as convex games possess nice properties both economically and computationally.
However, Solymosi and Raghavan \cite{SR01} show that even assignment games are hardly convex.
Recently, Kumabe and Maehara turn to study generalizations of matching games and succeed in characterizing the convexity of $b$-matching games \cite{KM20-IJCAI} and hypergraph matching games \cite{KM20-AAMAS} respectively.

In this paper, we study population monotonic allocation schemes in matching games.
An allocation scheme is a collection of rules defining how to share the profit among players in every coalition.
An allocations scheme is population monotonic if every player is better off when the coalition expands.
Population monotonic allocation schemes possess an appealing snowball effect, i.e., the incentive to join a coalition increases as the coalition grows larger \cite{Spru90}.
Moreover, population monotonic allocation schemes yield group strategyproof mechanisms which resist collusion among players \cite{Moul99, MS01}.
However, very few results are known even for allocation schemes.
Deng et al. \cite{DINZ00} study allocation schemes consisting of core allocations for every coalition and achieve a sufficient characterization for simple matching games.
Immorlica et al. \cite{IMM08} study the limitation of approximate population monotonic allocations schemes and show that no constant approximation factor exists for matching games.
In this paper, we complete this line of research by providing a necessary and sufficient characterization for the existence of population monotonic allocation schemes in matching games.  

The remainder of this paper is organized as follows.
In Section \ref{sec:preliminaries}, some basic notions in cooperative game theory are reviewed.
Section \ref{sec:PM} is devoted to a complete characterization for the existence of population monotonic allocation schemes in matching games.
Section \ref{sec:disscussion} concludes the results in this paper and discusses the directions of future work.

\section{Preliminaries}
\label{sec:preliminaries}

A \emph{cooperative game} $\Gamma=(N,\gamma)$ consists of a \emph{player set} $N$ and a \emph{characteristic function} $\gamma:2^N\rightarrow \mathbb{R}$ with convention $\gamma(\emptyset)=0$.
We call $N$ the \emph{grand coalition} and call $S$ a \emph{coalition} for any $S\subseteq N$.

An \emph{allocation} of $\Gamma$ is a vector $\boldsymbol{x}=(x_i)_{i\in N}$ specifying how to distribute the profit among players in the grand coalition $N$.
A \emph{core allocation} is an allocation $\boldsymbol{x}=(x_i)_{i\in N}$ satisfying \emph{efficiency} and \emph{coalitional rationality} conditions,
\begin{enumerate}
  \item[\textendash] \emph{efficiency}: $\sum_{i\in N} x_{i}=\gamma(N)$;
  \item[\textendash] \emph{coalitional rationality}: $\sum_{i\in S}x_i
\geq \gamma(S)$ for any $S\subseteq N$.
\end{enumerate}
The \emph{core} is the set of all core allocations.

An \emph{allocation scheme} of $\Gamma$ is a collection of allocations $\mathcal{X}=(\boldsymbol{x}_S)_{S\in 2^N \backslash \{\emptyset\}}$ specifying how to distribute the profit among players in every nonempty coalition $S\subseteq N$.
A \emph{population monotonic allocation scheme} (\emph{PMAS} for short) is an allocation scheme $\mathcal{X}=(\boldsymbol{x}_S)_{S\in 2^N \backslash \{\emptyset\}}$ satisfying \emph{efficiency} and \emph{monotonicity} conditions,
\begin{enumerate}
  \item[\textendash] \emph{efficiency}: $\sum_{i\in S} x_{S,i}=\gamma(S)$ for any $S\in 2^N\backslash \{\emptyset\}$;
  \item[\textendash] \emph{monotonicity}: $x_{S,i}\leq x_{T,i}$ for any $S, T\in 2^N\backslash \{\emptyset\}$ with $S\subseteq T$ and any $i\in S$.
\end{enumerate}
We call $\Gamma$ \emph{population monotonic} if it admits a PMAS.
 
Now we define matching games.
Let $G=(V,E;w)$ be a graph with an edge weight function $w:E\rightarrow \mathbb{R}_+$.
Throughout this paper, we always assume that $w_e>0$ for any $e\in E$ since we may remove every edge $e$ with $w_e=0$ in any maximum weight matching.
The \emph{matching game} on $G=(V,E;w)$ is cooperative profit game $\Gamma_{G}=(N,\gamma)$, where $N=V$ and $\gamma (S)$ equals the maximum weight of matchings in the induced subgraph $G[S]$ for any $S\subseteq N$.

\section{Characterizing population monotonicity}
\label{sec:PM}

To characterize population monotonic matching games,
we first study properties of fundamental structures in Subsection \ref{subsec:structures},
then identify some forbidden structures in Subsection \ref{subsec:obstructions},
and finally achieve a necessary and sufficient characterization in Subsection \ref{subsec:characterization}.
In the remainder of this section, $G=(V,E;w)$ denotes a simple graph with weight function $w:E\rightarrow \mathbb{R}_+$,
and $\Gamma_G=(N,\gamma)$ denotes the matching game on $G$, unless stated otherwise. 

\subsection{Fundamental structures}
\label{subsec:structures}
This subsection is devoted to properties of fundamental structures in population monotonic matching games.
A graph is a \emph{complete graph} if every pair of vertices is connected by an edge.
A complete graph with $n$ vertices is denoted by $K_n$.
A graph is a \emph{path graph} if it is a tree with maximum degree no more than $2$.
A path graph with $n$ vertices is denoted by $P_n$.
A graph is a \emph{cycle graph} if it is a connected graph where every vertex has degree $2$.
A cycle graph with $n$ vertices is denoted by $C_n$.
A graph is a \emph{paw graph} if it is isomorphic to the third graph in Figure \ref{fig:structures}.
A graph is a \emph{diamond graph} if it is isomorphic to the last graph in Figure \ref{fig:structures}.
As we shall see, graphs in Figure \ref{fig:structures} are fundamental structures (induced subgraphs) in graphs inducing population monotonic matching games.
Thus we study properties of these structures first.

\begin{figure}[t]%[hbt!]
\vspace{-1em}
\centerline{\includegraphics[width=1.5\textwidth]{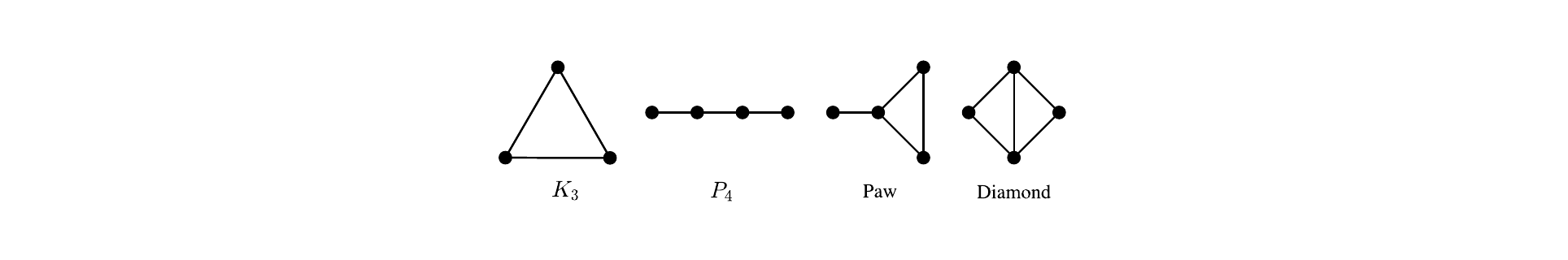}}
\vspace{-2em}
\caption{Fundamental structures}
\label{fig:structures}
\vspace{-1em}
\end{figure}

\begin{lemma}[$K_3$-property]
\label{thm:K3}
Let $H\subseteq G$ with $E(H)=\{12,13,23\}$ be an induced subgraph with $w_{12}\leq w_{13}\leq w_{23}$.
If $\Gamma_G $ is population monotonic, then $w_{23} \geq w_{12}+w_{13}$.
\end{lemma}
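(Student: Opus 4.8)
The plan is to exploit the PMAS conditions restricted to the coalitions $\{1\}$, $\{1,2\}$, $\{1,3\}$, $\{2,3\}$, and $T:=\{1,2,3\}$. First I would record the relevant values of the characteristic function. Since $H$ is an induced subgraph, the edges among $\{1,2,3\}$ are exactly $12,13,23$; hence $\gamma(\{1\})=0$, $\gamma(\{1,2\})=w_{12}$, $\gamma(\{1,3\})=w_{13}$, $\gamma(\{2,3\})=w_{23}$, and, because a triangle admits at most one matching edge, $\gamma(T)=\max\{w_{12},w_{13},w_{23}\}=w_{23}$.

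The key step is to show that player $1$ receives nothing in the grand coalition $T$, i.e. $x_{T,1}=0$. On one hand, monotonicity applied to $\{2,3\}\subseteq T$ gives $x_{T,2}\ge x_{\{2,3\},2}$ and $x_{T,3}\ge x_{\{2,3\},3}$, so by efficiency $x_{T,2}+x_{T,3}\ge \gamma(\{2,3\})=w_{23}=\gamma(T)$, which forces $x_{T,1}\le 0$. On the other hand, monotonicity applied to $\{1\}\subseteq T$ gives $x_{T,1}\ge x_{\{1\},1}=\gamma(\{1\})=0$. Combining the two bounds yields $x_{T,1}=0$.

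The same squeezing idea then pins down player $1$'s share in the pairs. For the pair $\{1,2\}$, monotonicity along $\{1\}\subseteq\{1,2\}\subseteq T$ traps $x_{\{1,2\},1}$ between $x_{\{1\},1}=0$ and $x_{T,1}=0$, so $x_{\{1,2\},1}=0$ and hence, by efficiency, $x_{\{1,2\},2}=w_{12}$. Monotonicity then gives $x_{T,2}\ge x_{\{1,2\},2}=w_{12}$, and the symmetric argument on $\{1,3\}$ gives $x_{T,3}\ge w_{13}$. Adding these and using the efficiency identity $x_{T,2}+x_{T,3}=w_{23}$ delivers the desired inequality $w_{23}\ge w_{12}+w_{13}$.

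I do not expect a genuine obstacle here; the whole argument is driven by the observation that the heaviest edge $23$ already realizes $\gamma(T)$, so the third vertex is redundant in the triangle and must be allocated zero. The only point requiring care is the correct directional use of monotonicity: once as $S\subseteq T$ to bound shares from below in the larger coalition, and once to squeeze player $1$'s pairwise share to zero from both sides.
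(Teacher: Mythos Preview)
Your argument is correct. It differs from the paper's, though, in organization. The paper uses a one-line ``doubling'' trick: writing $2\gamma(T)$ as two copies of $x_{T,1}+x_{T,2}+x_{T,3}$ and regrouping the six terms into the three pairs $(x_{T,1}+x_{T,2})$, $(x_{T,1}+x_{T,3})$, $(x_{T,2}+x_{T,3})$, each of which dominates the corresponding $\gamma$ of a two-element coalition by monotonicity; this immediately gives $2w_{23}\ge w_{12}+w_{13}+w_{23}$. Your route instead squeezes player~$1$'s allocation to zero using the singleton $\{1\}$ and the pair $\{2,3\}$, then transfers the full pair values $w_{12}$ and $w_{13}$ to players~$2$ and~$3$ before summing. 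Your approach buys the extra information $x_{T,1}=x_{\{1,2\},1}=x_{\{1,3\},1}=0$, which is not needed here but is a pleasant structural fact. The paper's doubling argument, on the other hand, never touches singletons and is the template that the paper reuses verbatim for the $2P_3$-, $(P_3,K_3)$-, and $2K_3$-cover inequalities in the next lemma, where no analogous ``redundant vertex'' is available to squeeze.
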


\begin{proof}
Let $\mathcal{X}=(\boldsymbol{x}_{S})_{S\in 2^N \backslash \{\emptyset\}}$ be a PMAS in $\Gamma_{G}$.
By efficiency and monotonicity of PMASes, we have
  \begin{equation}\label{eq:C3}
    \begin{split}
      2 w_{23}
      = & ~ 2 \gamma (\{1,2,3\})\\
       = & ~ ( x_{\{1,2,3\},1}+x_{\{1,2,3\},2}+x_{\{1,2,3\},3} ) + ( x_{\{1,2,3\},1}+x_{\{1,2,3\},2}+x_{\{1,2,3\},3} )\\
      \geq & ~ (x_{\{1,2\},1}+x_{\{1,2\},2}) + (x_{\{1,3\},3}+x_{\{1,3\},1}) + (x_{\{2,3\},2}+x_{\{2,3\},3})\\
      = & ~ \gamma(\{1,2\})+ \gamma(\{1,3\})+ \gamma(\{2,3\})\\
      = & ~ w_{12}+w_{13}+w_{23}.
    \end{split}
  \end{equation}
It follows that $w_{23}\geq w_{12}+w_{13}$.
\end{proof}

The following lemma provides a unified framework to develop more results on graphs inducing population monotonic matching games.
We remark that Lemma \ref{thm:cover} \emph{\ref{itm:2K3}} never occurs in population monotonic matching games, but it serves as an intermediate result in proving Lemma \ref{thm:K4}.

\begin{lemma}
\label{thm:cover}
Let $S=\{1,2,3,4\}$ be a vertex subset of $G$.
Let $H_1=G[\{1,2,3\}]$ and $H_2=G[\{2,3,4\}]$ be two induced subgraphs of $G$.
\begin{enumerate}[label={\emph{($\roman*$)}}]
	\item \label{itm:2P3} Assume that $E(H_1)=\{12,23\}$ and $E(H_2)=\{23,34\}$.
	If $\Gamma_G$ is population monotonic, then $w_{23} \geq w_{12}+w_{34}$.
	\item \label{itm:P3K3} Assume that $E(H_1)=\{12,23\}$, $E(H_2)=\{23,24,34\}$ and $w_{23}\geq w_{24}$.
	If $\Gamma_G$ is population monotonic, then $w_{23}\geq w_{12}+w_{34}$ and $w_{23}\geq w_{24}+w_{34}$.
	\item \label{itm:2K3} Assume that $E(H_1)=\{12,13,23\}$, $E(H_2)=\{23,24,34\}$ and $w_{12}\geq \max\{w_{13}, w_{23}\}$.
	If $\Gamma_G$ is population monotonic, then $w_{24}\geq w_{23}+w_{34}$.
\end{enumerate}
\end{lemma}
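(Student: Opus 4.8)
The plan is to prove all three parts through a single ``squeezing'' inequality on the shared pair $\{2,3\}$, and then to read off each conclusion from the values of $\gamma$ on the two triples $\{1,2,3\}$ and $\{2,3,4\}$. First I would record the elementary fact that under a PMAS no player earns more than its marginal contribution: if $\mathcal{X}=(\boldsymbol{x}_S)$ is a PMAS, then for any coalition $S$ and any $i\in S$, monotonicity gives $x_{S,j}\ge x_{S\setminus\{i\},j}$ for every $j\in S\setminus\{i\}$, so $\sum_{j\in S\setminus\{i\}}x_{S,j}\ge\gamma(S\setminus\{i\})$ by efficiency on $S\setminus\{i\}$, and combined with efficiency on $S$ this yields
\begin{equation*}
x_{S,i}=\gamma(S)-\sum_{j\in S\setminus\{i\}}x_{S,j}\le \gamma(S)-\gamma(S\setminus\{i\}).
\end{equation*}
Specializing to $S=\{2,3,4\}$, $i=2$ gives $x_{\{2,3,4\},2}\le\gamma(\{2,3,4\})-w_{34}$, and to $S=\{1,2,3\}$, $i=3$ gives $x_{\{1,2,3\},3}\le\gamma(\{1,2,3\})-w_{12}$.

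Next I would combine these with monotonicity along $\{2,3\}\subseteq\{2,3,4\}$ and $\{2,3\}\subseteq\{1,2,3\}$, namely $x_{\{2,3\},2}\le x_{\{2,3,4\},2}$ and $x_{\{2,3\},3}\le x_{\{1,2,3\},3}$. Since efficiency on $\{2,3\}$ reads $x_{\{2,3\},2}+x_{\{2,3\},3}=w_{23}$, adding the two bounds produces the master inequality
\begin{equation}\label{eq:master}
w_{23}\le\bigl(\gamma(\{1,2,3\})-w_{12}\bigr)+\bigl(\gamma(\{2,3,4\})-w_{34}\bigr),
\end{equation}
and everything then reduces to evaluating the two triple values.

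For \ref{itm:2P3} both triples are paths, so $\gamma(\{1,2,3\})=\max\{w_{12},w_{23}\}$ and $\gamma(\{2,3,4\})=\max\{w_{23},w_{34}\}$; a short check rules out $w_{23}<w_{12}$ and $w_{23}<w_{34}$ (each forces \eqref{eq:master} to contradict positivity of the weights), after which \eqref{eq:master} collapses to $w_{23}\ge w_{12}+w_{34}$. For \ref{itm:P3K3} the hypothesis $w_{23}\ge w_{24}$ gives $\gamma(\{2,3,4\})=\max\{w_{23},w_{34}\}$, so \eqref{eq:master} is identical to the previous case and again yields $w_{23}\ge w_{12}+w_{34}$; this in particular forces $w_{23}>w_{34}$, so $w_{23}$ is the heaviest edge of the triangle on $\{2,3,4\}$ and the $K_3$-property (Lemma \ref{thm:K3}) applied to $G[\{2,3,4\}]$ supplies the remaining inequality $w_{23}\ge w_{24}+w_{34}$. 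For \ref{itm:2K3} the assumption $w_{12}\ge\max\{w_{13},w_{23}\}$ forces $\gamma(\{1,2,3\})=w_{12}$, so the first bracket of \eqref{eq:master} vanishes and we are left with $w_{23}+w_{34}\le\gamma(\{2,3,4\})=\max\{w_{23},w_{24},w_{34}\}$; since the weights are positive the maximum can be neither $w_{23}$ nor $w_{34}$, hence it equals $w_{24}$, which is precisely $w_{24}\ge w_{23}+w_{34}$.

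I expect the only genuine work to be the discovery and justification of \eqref{eq:master}: seeing that the shared pair $\{2,3\}$ should be squeezed from both triples simultaneously, and that in each triple the complementary pair ($\{1,2\}$ or $\{3,4\}$) already absorbs the weight $w_{12}$ or $w_{34}$. After that the three conclusions are bookkeeping, the only non-routine points being the appeal to the $K_3$-property in \ref{itm:P3K3} and the observation in \ref{itm:2K3} that a maximum-weight matching of the triangle on $\{2,3,4\}$ must use the edge $24$. A minor but reassuring detail is that \eqref{eq:master} never involves a coalition containing both $1$ and $4$, so the (unspecified) adjacency of $1$ and $4$ in $G$ is irrelevant to the argument.
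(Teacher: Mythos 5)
Your proof is correct and follows essentially the same route as the paper: your ``master inequality'' is exactly the paper's key estimate $\gamma(\{1,2,3\})+\gamma(\{2,3,4\})\geq w_{12}+w_{23}+w_{34}$, obtained from the same applications of efficiency and monotonicity to the coalitions $\{1,2\}$, $\{2,3\}$, $\{3,4\}$ and the two triples, merely repackaged through a marginal-contribution bound. The subsequent case analysis for each part (including the appeal to the $K_3$-property in \emph{(ii)}) matches the paper's, and your explicit check that $w_{23}$ is the heaviest edge of the triangle on $\{2,3,4\}$ before invoking Lemma \ref{thm:K3} is a small point the paper leaves implicit.
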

For the vertex subset $S$ and induced subgraphs $H_1$, $H_2$ in Lemma \ref{thm:cover}, we say that $H_1$ and $H_2$ make a \emph{$2 P_3$-cover} of $S$ in \emph{\ref{itm:2P3}},
a \emph{$(P_3, K_3)$-cover} of $S$ in \emph{\ref{itm:P3K3}},
and a \emph{$2 K_3$-cover} of $S$ in \emph{\ref{itm:2K3}}, respectively.

\begin{proof}
Let $\mathcal{X}=(\boldsymbol{x}_{S})_{S\in 2^N \backslash \{\emptyset\}}$ be a PMAS in $\Gamma_{G}$.
By efficiency and monotonicity of PMASes, we have
\begin{equation}
\label{eq:4cover}
    \begin{split}
          &~ \gamma (\{1,2,3\})+\gamma (\{2,3,4\})\\ 
       = &~ (x_{\{1,2,3\},1}+x_{\{1,2,3\},2}+x_{\{1,2,3\},3})+(x_{\{2,3,4\},2}+x_{\{2,3,4\},3}+x_{\{2,3,4\},4})\\
      \geq &~ (x_{\{1,2\},1}+x_{\{1,2\},2}) + (x_{\{2,3\},3}+x_{\{2,3\},2}) + (x_{\{3,4\},3}+x_{\{3,4\},4}) \\
      = &~ \gamma(\{1,2\})+ \gamma(\{2,3\})+ \gamma(\{3,4\})\\
      = &~ w_{12}+w_{23}+w_{34}.
   \end{split}
\end{equation}

\emph{\ref{itm:2P3}}
Assume that $E(H_1)=\{12,23\}$ and $E(H_2)=\{23,34\}$.
Thus
\begin{equation}
\label{eq:P4}
\gamma (\{1,2,3\})+\gamma (\{2,3,4\})= \max \{w_{12},w_{23}\}+\max \{w_{23},w_{34}\}.
\end{equation}
We claim that $w_{23}> \max\{w_{12},w_{34}\}$.
Assume to the contrary that $w_{23}\leq \max\{w_{12},w_{34}\}$.
We distinguish two cases and show that neither is possible.
If $w_{23}\leq \min\{w_{12},w_{34}\}$, then $w_{12}+w_{34} \geq w_{12}+w_{23}+w_{34}$ follows from \eqref{eq:4cover}, implying $w_{23}\leq 0$.
If $\min \{w_{12}, w_{34}\} < w_{23}\leq \max \{w_{12}, w_{34}\}$,
then $w_{23}+\max \{w_{12},w_{34}\} \geq w_{12}+w_{23}+w_{34}$ follows from \eqref{eq:4cover},
implying $\min \{w_{12}, w_{34}\}\leq 0$.
Neither case is possible.
Hence $w_{23}> \max\{w_{12},w_{34}\}$.
Then $2 w_{23} \geq w_{12}+w_{23}+w_{34}$ follows from \eqref{eq:4cover}, implying $w_{23}\geq w_{12}+w_{34}$.

\emph{\ref{itm:P3K3}}
Assume that $E(H_1)=\{12,23\}$, $E(H_2)=\{23,24,34\}$ and $w_{23}\geq w_{24}$.
Hence $\gamma (\{2,3,4\})=\max\{w_{23},w_{34}\}$.
Then \eqref{eq:P4} still applies, implying $w_{23}\geq w_{12}+w_{34}$.
And $w_{23}\geq w_{24}+w_{34}$ follows from Lemma \ref{thm:K3}.

\emph{\ref{itm:2K3}}
Assume that $E(H_1)=\{12,13,23\}$, $E(H_2)=\{23,24,34\}$ and $w_{12}\geq \max \{w_{13}, w_{23}\}$.
Thus
\begin{equation}
\gamma (\{1,2,3\})+\gamma (\{2,3,4\})=w_{12} + \max \{w_{23},w_{24}, w_{34}\}.
\end{equation}
Then $\max \{w_{23},w_{24}, w_{34}\}\geq w_{23}+w_{34}$ follows from \eqref{eq:4cover},
implying that $w_{24}\geq w_{23}+w_{34}$.
\end{proof}

\begin{lemma}[$P_4$-property]
\label{thm:P4}
Let $H\subseteq G$ with $E(H)=\{12,23,34\}$ be an induced subgraph.
If $\Gamma_G$ is population monotonic, then $w_{23} \geq w_{12}+w_{34}$.
\end{lemma}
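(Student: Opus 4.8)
The plan is to recognize that the $P_4$-property is just the $2P_3$-cover case of Lemma~\ref{thm:cover}, so rather than redo any computation I would reduce directly to Lemma~\ref{thm:cover}~\ref{itm:2P3}. First I would unpack what the hypothesis ``$H\subseteq G$ with $E(H)=\{12,23,34\}$ is an \emph{induced} subgraph'' actually forces: the vertex set of $H$ is $\{1,2,3,4\}$, and because $H$ is induced, the edges $12$, $23$, $34$ are the \emph{only} edges of $G$ among these four vertices; in particular both $13$ and $24$ are absent. Restricting to the triples, this says exactly that $G[\{1,2,3\}]$ has edge set $\{12,23\}$ and $G[\{2,3,4\}]$ has edge set $\{23,34\}$, since $G[\{1,2,3\}]$ and $G[\{2,3,4\}]$ coincide with the corresponding induced subgraphs of $H$.

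These are precisely the hypotheses of the $2P_3$-cover in Lemma~\ref{thm:cover}~\ref{itm:2P3}, so I would simply invoke that part to obtain $w_{23}\geq w_{12}+w_{34}$, which is the claim. I would note in passing that the induced-$P_4$ hypothesis carries one extra constraint that \ref{itm:2P3} does not demand, namely that the edge $14$ is also absent; but this additional information is not needed for the conclusion, so no separate treatment of $14$ is required.

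If a self-contained argument were preferred, I would instead replay the short derivation behind \ref{itm:2P3}: fixing a PMAS $\mathcal{X}$, efficiency and monotonicity give $\gamma(\{1,2,3\})+\gamma(\{2,3,4\})\geq w_{12}+w_{23}+w_{34}$, and since both induced triples are paths one has $\gamma(\{1,2,3\})=\max\{w_{12},w_{23}\}$ and $\gamma(\{2,3,4\})=\max\{w_{23},w_{34}\}$. A brief case analysis on the position of $w_{23}$ relative to $w_{12}$ and $w_{34}$, using positivity of the weights to rule out $w_{23}\leq\max\{w_{12},w_{34}\}$, forces $w_{23}>\max\{w_{12},w_{34}\}$ and hence $2w_{23}\geq w_{12}+w_{23}+w_{34}$. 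I do not anticipate any genuine obstacle: the only real step is observing that the induced-$P_4$ hypothesis is a strengthening of the $2P_3$-cover hypothesis, after which Lemma~\ref{thm:cover}~\ref{itm:2P3} applies verbatim.
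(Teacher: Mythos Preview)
Your proposal is correct and matches the paper's proof essentially verbatim: the paper also observes that $G[\{1,2,3\}]$ and $G[\{2,3,4\}]$ form a $2P_3$-cover of $\{1,2,3,4\}$ and then invokes Lemma~\ref{thm:cover}\,\ref{itm:2P3} directly. Your remark that the absence of the edge $14$ is extra information not needed for the argument is accurate but unstated in the paper.
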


\begin{proof}
Notice that $G[\{1,2,3\}]$ and $G[\{2,3,4\}]$ make a $2 P_3$-cover of $\{1,2,3,4\}$.
Then $w_{23} \geq w_{12}+w_{34}$ follows from Lemma \ref{thm:cover} \emph{\ref{itm:2P3}}.
\end{proof}

\begin{lemma}[Paw-property]
\label{thm:Paw}
Let $H\subseteq G$ with $E(H)=\{12,23,24,34\}$ be an induced subgraph with $w_{23}\geq w_{24}$.
If $\Gamma_G$ is population monotonic, then $w_{23}\geq w_{12}+w_{34}$ and $w_{23}\geq w_{24}+w_{34}$.
\end{lemma}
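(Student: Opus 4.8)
The plan is to reduce the claim to the unified framework of Lemma \ref{thm:cover}, exactly as the $P_4$-property (Lemma \ref{thm:P4}) was obtained from part \ref{itm:2P3}. First I would unpack the paw structure on $\{1,2,3,4\}$: since $H$ is an induced subgraph with $E(H)=\{12,23,24,34\}$, vertex $1$ is a pendant attached to vertex $2$, while $\{2,3,4\}$ spans a triangle. Restricting to the two relevant triples, the induced subgraph $G[\{1,2,3\}]$ carries exactly the edges $\{12,23\}$, since $13\notin E(H)$, and is therefore a $P_3$; meanwhile $G[\{2,3,4\}]$ carries exactly the edges $\{23,24,34\}$ and is therefore a $K_3$.

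Consequently $G[\{1,2,3\}]$ and $G[\{2,3,4\}]$ make a $(P_3,K_3)$-cover of $\{1,2,3,4\}$ in the sense of Lemma \ref{thm:cover} \ref{itm:P3K3}. The hypothesis $w_{23}\geq w_{24}$ in the present statement is precisely the side condition demanded by that part of the lemma, so I would simply invoke it to obtain both desired inequalities $w_{23}\geq w_{12}+w_{34}$ and $w_{23}\geq w_{24}+w_{34}$ simultaneously.

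Because the entire combinatorial and PMAS-based argument has already been absorbed into Lemma \ref{thm:cover}, I expect no genuine obstacle here; the proof amounts to reading off the correct cover from the induced paw. The only point meriting a moment of care is verifying that the two triples share the edge $23$ and that no spurious edge (in particular $13$) is present, since such an edge would alter the value of $\gamma(\{1,2,3\})$ and thereby break the alignment with the hypotheses of part \ref{itm:P3K3}.
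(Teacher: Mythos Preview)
Your proposal is correct and mirrors the paper's own proof exactly: the paper likewise observes that $G[\{1,2,3\}]$ and $G[\{2,3,4\}]$ form a $(P_3,K_3)$-cover of $\{1,2,3,4\}$ and then invokes Lemma~\ref{thm:cover}\,\ref{itm:P3K3} under the hypothesis $w_{23}\geq w_{24}$ to obtain both inequalities. Your additional remark that the absence of edge $13$ is what makes $G[\{1,2,3\}]$ a genuine $P_3$ is a welcome clarification, but otherwise there is nothing to add.
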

\begin{proof}
Notice that $G[\{1,2,3\}]$ and $G[\{2,3,4\}]$ make a $(P_3,K_3)$-cover of $\{1,2,3,4\}$.
Then $w_{23}\geq w_{12}+w_{34}$ and $w_{23}\geq w_{24}+w_{34}$ follow from Lemma \ref{thm:cover} \emph{\ref{itm:P3K3}}.
\end{proof}

Lemma \ref{thm:Paw} implies that for any $K_3$-subgraph $H$ in a graph inducing population monotonic matching games, endpoints of the maximum weight edge in $H$ are the only possible vertices in $H$ incident to other edges in the graph.

\begin{lemma}[Diamond-property]
\label{thm:Diamond}
Let $H\subseteq G$ with $E(H)=\{12,13,23,24,34\}$ be an induced subgraph.
If $\Gamma_G$ is population monotonic, then $w_{23}\geq w_{12}+w_{13}$ and $w_{23}\geq w_{24}+ w_{34}$.
\end{lemma}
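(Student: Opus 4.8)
The plan is to reduce the statement to the $K_3$-property. Concretely, I would first show that $w_{23}$ is a maximum-weight edge of $H$, i.e. $w_{23}\geq\max\{w_{12},w_{13},w_{24},w_{34}\}$. Once this is known, the two triangles $G[\{1,2,3\}]$ and $G[\{2,3,4\}]$ each have $23$ as their heaviest edge, so Lemma~\ref{thm:K3} applied to them yields $w_{23}\geq w_{12}+w_{13}$ and $w_{23}\geq w_{24}+w_{34}$ at once. The point worth flagging at the outset is that the diamond contains no induced $P_4$ or paw (its only four-vertex induced subgraph is the diamond itself), so neither Lemma~\ref{thm:P4} nor Lemma~\ref{thm:Paw} applies directly; moreover the obvious $2K_3$-cover of $\{1,2,3,4\}$ (the only named cover of Lemma~\ref{thm:cover} matching the diamond) constrains a single edge and does not by itself force $23$ to dominate. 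The substantive work therefore goes into proving that $23$ is the heaviest edge.

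To prove this I would argue by contradiction, assuming $w_{23}$ is not a maximum-weight edge. By the symmetry of the diamond (the automorphisms $1\leftrightarrow 4$ and $2\leftrightarrow 3$ act transitively on the four edges $12,13,24,34$), I may assume the maximum is attained by $w_{12}$, so that $w_{12}>w_{23}$ and $w_{12}\geq w_{13},w_{24},w_{34}$; in particular $\gamma(\{1,2,3\})=w_{12}$ and $\gamma(\{1,2,4\})=w_{12}$. The key step is to invoke the covering inequality \eqref{eq:4cover}, whose derivation only uses that the three relevant pairs are edges, not for the standard path but for the vertex ordering $1,2,4,3$ (whose consecutive pairs $12,24,43$ are all edges of $H$). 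This gives $\gamma(\{1,2,4\})+\gamma(\{2,3,4\})\geq w_{12}+w_{24}+w_{34}$, hence $\gamma(\{2,3,4\})\geq w_{24}+w_{34}$. Since $\gamma(\{2,3,4\})=\max\{w_{23},w_{24},w_{34}\}$ and $w_{24}+w_{34}$ strictly exceeds each of $w_{24}$ and $w_{34}$, this forces $\gamma(\{2,3,4\})=w_{23}$.

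Finally I would apply \eqref{eq:4cover} along the ordinary path $1,2,3,4$, obtaining $\gamma(\{1,2,3\})+\gamma(\{2,3,4\})\geq w_{12}+w_{23}+w_{34}$; substituting $\gamma(\{1,2,3\})=w_{12}$ and $\gamma(\{2,3,4\})=w_{23}$ collapses this to $w_{34}\leq 0$, contradicting $w_{34}>0$. Hence $w_{23}$ is a maximum-weight edge of $H$, and the two applications of Lemma~\ref{thm:K3} to $G[\{1,2,3\}]$ and $G[\{2,3,4\}]$ finish the proof. I expect the main obstacle to be discovering the nonstandard cover along $1,2,4,3$: the inequality that pins down $\gamma(\{2,3,4\})=w_{23}$ comes from a spanning path of the diamond that deliberately avoids the edge $23$, and the crux is recognizing that \eqref{eq:4cover} remains valid for any such vertex ordering rather than only the labelled one used in Lemma~\ref{thm:cover}.
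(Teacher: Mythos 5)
Your proof is correct and follows essentially the same route as the paper: both arguments first establish that $23$ is the heaviest edge of the diamond by applying the covering inequality \eqref{eq:4cover} along spanning paths that avoid the edge $23$ (your ordering $1,2,4,3$ is exactly the paper's $(P_3,K_3)$-cover by $G[\{1,2,4\}]$ and $G[\{2,3,4\}]$), and then invoke Lemma~\ref{thm:K3} on the two triangles. The only cosmetic difference is that you argue by contradiction under a WLOG on which outer edge is heaviest, whereas the paper applies two symmetric covers directly.
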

\begin{proof}
On one hand, $G[\{1,2,4\}]$ and $G[\{2,3,4\}]$ make a $(P_3,K_3)$-cover of $\{1,2,3,4\}$.
Lemma \ref{thm:Paw} implies $w_{34} < \max\{w_{23}, w_{24}\}$.
On the other hand, $G[\{1,3,4\}]$ and $G[\{2,3,4\}]$ make another $(P_3,K_3)$-cover of $\{1,2,3,4\}$.
Lemma \ref{thm:Paw} implies $w_{24} < \max\{w_{23}, w_{34}\}$.
It follows that $w_{23}>\max\{w_{24},w_{34}\}$.
By symmetry, $w_{23}>\max\{w_{12},w_{13}\}$.
Then $w_{23}\geq w_{12}+w_{13}$ and $w_{23}\geq w_{24}+w_{34}$ follow from Lemma \ref{thm:K3}.
\end{proof}

As we shall see in Lemma \ref{thm:K4}, any population monotonic matching game has no $K_4$-subgraph.
Thus Lemma \ref{thm:Diamond} implies that for any two $K_3$-subgraphs $H_1$ and $H_2$ in a graph inducing population monotonic matching games, if $H_1$ and $H_2$ share a common edge $e$, then $e$ must be the maximum weight edge in both $H_1$ and $H_2$.

\subsection{Forbidden structures}
\label{subsec:obstructions}

This subsection develops some forbidden structures in population monotonic matching games.
Even though this list is not exhaustive, it is sufficient to achieve a complete characterization for population monotonic matching games.
For any two graphs $H_1$ and $H_2$, $H_1$ is said \emph{$H_2$-free} if $H_1$ has no induced subgraph isomorphic to $H_2$.

\begin{figure}[t]%[hbt!]
\centering
\vspace{-1em}
\centerline{\includegraphics[width=1.45\textwidth]{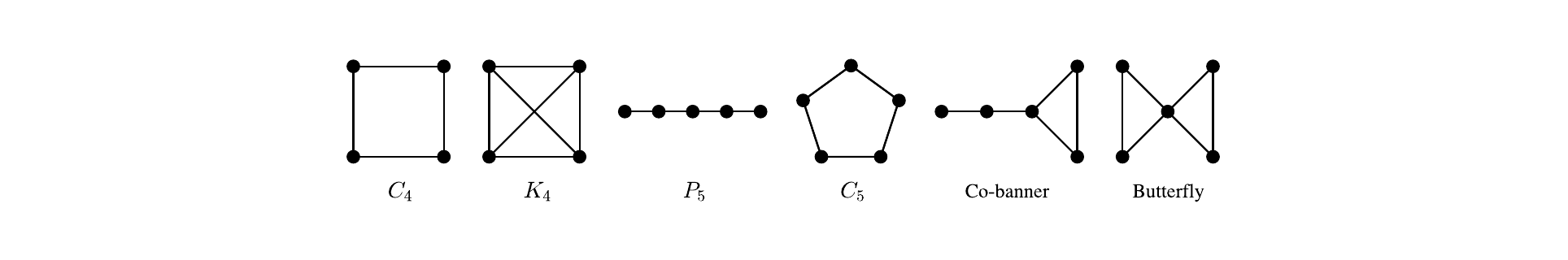}}
\vspace{-2em}
\caption{Some forbidden structures}
\label{fig:obstructions}
\vspace{-1em}
\end{figure}

Lemmas \ref{thm:C4} and \ref{thm:K4} concern two forbidden structures with $4$ vertices.

\begin{lemma}
\label{thm:C4}
If $\Gamma_{G}$ is population monotonic, then $G$ is $C_{4}$-free.
\end{lemma}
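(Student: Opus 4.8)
The plan is to argue by contradiction: suppose $G$ contains an induced $C_4$, say on the vertex set $\{1,2,3,4\}$ with edge set $\{12,23,34,41\}$ and non-edges $13$ and $24$. I want to extract two conflicting lower bounds on the cycle weights and play them against the strict positivity of $w$.

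The natural first instinct is to invoke the $P_4$-property (Lemma \ref{thm:P4}) on the path $1$–$2$–$3$–$4$, but this is \emph{not} directly available. The $P_4$-property requires the induced subgraph on the four vertices to have edge set exactly $\{12,23,34\}$, whereas here the induced subgraph is the whole $C_4$, which also contains the closing edge $41$. Recognizing this mismatch is the crux of the argument; the remedy is to fall back on the $2P_3$-cover of Lemma \ref{thm:cover} \ref{itm:2P3}, whose hypotheses only constrain the two three-vertex induced subgraphs $G[\{1,2,3\}]$ and $G[\{2,3,4\}]$ and say nothing about the pair $\{1,4\}$. In the chordless four-cycle both of these triples induce a $P_3$ (since $13$ and $24$ are non-edges), so the cover applies verbatim despite $41$ being present.

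Concretely, I would apply Lemma \ref{thm:cover} \ref{itm:2P3} twice, once for each ``opposite'' edge of the cycle. Using $G[\{1,2,3\}]$ (edges $12,23$) together with $G[\{2,3,4\}]$ (edges $23,34$) gives the $2P_3$-cover of $\{1,2,3,4\}$ sharing the edge $23$, which yields $w_{23}\geq w_{12}+w_{34}$. Rotating the labels, $G[\{2,3,4\}]$ (edges $23,34$) together with $G[\{3,4,1\}]$ (edges $34,41$) form another $2P_3$-cover, now sharing $34$, which yields $w_{34}\geq w_{23}+w_{41}$. Since every edge weight is strictly positive, the first inequality forces $w_{23}>w_{34}$ while the second forces $w_{34}>w_{23}$, a contradiction. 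Hence no induced $C_4$ can exist, so $G$ is $C_4$-free.

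The only genuine obstacle is the one flagged above: noticing that the straightforward $P_4$-property fails on a chordless four-cycle (because the induced subgraph is $C_4$, not $P_4$) and that one must instead route through the label-insensitive $2P_3$-cover lemma, which ignores the closing edge. Once that observation is made, the two rotated applications and the positivity argument are entirely routine.
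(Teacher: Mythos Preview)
Your argument is correct and mirrors the paper's proof almost verbatim: both pass to an induced $C_4$ on $\{1,2,3,4\}$, apply Lemma~\ref{thm:cover}\,\ref{itm:2P3} to the two $2P_3$-covers $(G[\{1,2,3\}],G[\{2,3,4\}])$ and $(G[\{2,3,4\}],G[\{1,3,4\}])$, and derive the incompatible inequalities $w_{23}\geq w_{12}+w_{34}$ and $w_{34}\geq w_{23}+w_{14}$. Your explicit observation that the $P_4$-property is unavailable (because the induced subgraph on four vertices is $C_4$, not $P_4$) is a helpful clarification that the paper leaves implicit.
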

\begin{proof}
Assume to the contrary that $H\subseteq G$ is an induced subgraph isomorphic to $C_4$.
Without loss of generality, assume that $V(H)=\{1,2,3,4\}$ and $E(H)=\{12,23,34,14\}$.
Notice that $G[\{1,2,3\}]$ and $G[\{2,3,4\}]$ make a $2 P_3$-cover of $V(H)$,
and that $G[\{2,3,4\}]$ and $G[\{1,3,4\}]$ make another $2 P_3$-cover of $V(H)$.
Lemma \ref{thm:cover} \emph{\ref{itm:2P3}} implies $w_{23}\geq w_{12}+w_{34}$ and $w_{34}\geq w_{23}+w_{14}$.
A contradiction occurs.
\end{proof}

\begin{lemma}
\label{thm:K4}
If $\Gamma_{G}$ is population monotonic, then $G$ is $K_{4}$-free.
\end{lemma}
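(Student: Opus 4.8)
The plan is to argue by contradiction in the same spirit as Lemma~\ref{thm:C4}, but feeding the graph into the $2K_3$-cover of Lemma~\ref{thm:cover}~\emph{\ref{itm:2K3}} rather than the $2P_3$-cover. Suppose $H\subseteq G$ is an induced subgraph isomorphic to $K_4$ on $V(H)=\{1,2,3,4\}$, so that all six edges $12,13,14,23,24,34$ are present. Since $K_4$ is edge-transitive, I may relabel the vertices so that $w_{12}=\max_{e\in E(H)}w_{e}$; in particular $w_{12}\ge w_{13},w_{14},w_{23},w_{24}$.

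Next I would exhibit two $2K_3$-covers of $\{1,2,3,4\}$, both built on the triangle $G[\{2,3,4\}]$. First, because $H$ is induced, $G[\{1,2,3\}]$ and $G[\{2,3,4\}]$ are triangles with $E(G[\{1,2,3\}])=\{12,13,23\}$ and $E(G[\{2,3,4\}])=\{23,24,34\}$, and the hypothesis $w_{12}\ge\max\{w_{13},w_{23}\}$ holds by the choice of $w_{12}$. Lemma~\ref{thm:cover}~\emph{\ref{itm:2K3}} then yields $w_{24}\ge w_{23}+w_{34}$. Second, I would apply the same lemma to $G[\{1,2,4\}]$ and $G[\{2,3,4\}]$, that is, interchange the roles of vertices $3$ and $4$ in the labeled template so that the shared edge becomes $24$ while the dominant edge is still $12$; the hypothesis now reads $w_{12}\ge\max\{w_{14},w_{24}\}$ and again holds, and the conclusion lands on the symmetric inequality $w_{23}\ge w_{24}+w_{34}$.

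Adding the two inequalities produces $w_{24}+w_{23}\ge w_{23}+w_{24}+2w_{34}$, hence $w_{34}\le 0$, contradicting the standing assumption $w_{e}>0$ for every edge. This would give the desired contradiction and establish that $G$ is $K_4$-free.

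The step I expect to require the most care is the second invocation of Lemma~\ref{thm:cover}~\emph{\ref{itm:2K3}}: one must correctly match the actual triangles $G[\{1,2,4\}]$ and $G[\{2,3,4\}]$ to the labeled template $E(H_1)=\{12,13,23\}$, $E(H_2)=\{23,24,34\}$, tracking that the shared edge is $24$, that the dominant edge remains $12$, and that the conclusion therefore concerns $w_{23}$ rather than $w_{24}$. The key conceptual observation—flagged in the remark preceding Lemma~\ref{thm:cover}—is that, although a $2K_3$-cover can never genuinely occur in a population monotonic matching game, its hypothesis is \emph{automatically} satisfied inside a putative $K_4$ once $12$ is taken to be a globally heaviest edge, and the two symmetric conclusions it forces are jointly inconsistent with positivity of the weights.
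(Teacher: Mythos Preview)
Your proof is correct and follows essentially the same route as the paper: choose a heaviest edge of the $K_4$, apply Lemma~\ref{thm:cover}~\emph{\ref{itm:2K3}} to two $2K_3$-covers (each pairing a triangle through the heaviest edge with the opposite triangle), and add the resulting inequalities to force a nonpositive edge weight. The only cosmetic differences are the labeling of the maximum edge ($12$ versus the paper's $13$) and that the paper additionally records two $K_3$-property inequalities from Lemma~\ref{thm:K3} which are not actually used in reaching the contradiction; your version is slightly leaner for omitting them.
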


\begin{proof}
Assume to the contrary that $H\subseteq G$ is an induced subgraph isomorphic to $K_4$.
Without loss of generality, assume that $V(H)=\{1,2,3,4\}$.
By symmetry, further assume that $w_{13}=\max_{e\in E(H)}\{w_{e}\}$.
Lemma \ref{thm:K3} implies that $w_{13}\geq w_{12}+w_{23}$ and $w_{13}\geq w_{14}+w_{34}$.
Notice that $G[\{1,2,3\}]$ and $G[\{1,2,4\}]$ make a $2 K_3$-cover of $V(H)$,
and that $G[\{1,2,4\}]$ and $G[\{1,3,4\}]$ make another $2 K_3$-cover of $V(H)$.
Lemma \ref{thm:cover} \emph{\ref{itm:2K3}} implies that $w_{14}\geq w_{12}+w_{24}$ and $w_{12}\geq w_{14}+w_{24}$, which is absurd.
\end{proof}

Lemmas \ref{thm:P5C5} - \ref{thm:Butterfly} provide four forbidden structures with $5$ vertices.

\begin{lemma}
\label{thm:P5C5}
If $\Gamma_{G}$ is population monotonic, then $G$ is $(P_5, C_5)$-free.
\end{lemma}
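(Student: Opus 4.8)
The plan is to prove the statement by contradiction, following the same pattern used for the earlier forbidden structures $C_4$ and $K_4$: assume $G$ contains an induced $P_5$ or $C_5$, then derive mutually inconsistent weight inequalities from the $P_4$-property (Lemma~\ref{thm:P4}) and the covering lemma (Lemma~\ref{thm:cover}). Since the statement bundles two graphs together, I would handle each case separately, though I expect the cyclic structure of $C_5$ to yield a clean symmetric contradiction while $P_5$ may require slightly more care at the path endpoints.

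For the $C_5$ case, suppose $H \subseteq G$ is an induced cycle on $\{1,2,3,4,5\}$ with edges $\{12,23,34,45,15\}$. Every consecutive triple of edges forms an induced $P_4$, so applying Lemma~\ref{thm:P4} to each of the five $P_4$'s obtained by rotating around the cycle gives a cyclic system of inequalities: $w_{23} \geq w_{12}+w_{34}$, $w_{34} \geq w_{23}+w_{45}$, and so on around the cycle. Since all edge weights are strictly positive, each such inequality forces the ``middle'' edge to strictly exceed its neighbor, and chaining them around the cycle produces $w_{23} > w_{34} > w_{45} > \cdots > w_{23}$ (or an analogous strict cyclic chain), which is absurd. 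The key is that positivity of weights turns each $P_4$-inequality into a strict comparison between adjacent edges, and a cycle of strict comparisons cannot close up.

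For the $P_5$ case, suppose $H$ is an induced path on $\{1,2,3,4,5\}$ with edges $\{12,23,34,45\}$. Here I would extract the two overlapping induced $P_4$'s, namely $G[\{1,2,3,4\}]$ with edges $\{12,23,34\}$ and $G[\{2,3,4,5\}]$ with edges $\{23,34,45\}$. Lemma~\ref{thm:P4} applied to these gives $w_{23} \geq w_{12}+w_{34}$ and $w_{34} \geq w_{23}+w_{45}$. Adding these and using positivity of $w_{12}$ and $w_{45}$ yields $w_{23}+w_{34} \geq w_{12}+w_{23}+w_{34}+w_{45} > w_{23}+w_{34}$, a contradiction. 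Thus no induced $P_5$ can exist either.

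The main obstacle I anticipate is purely bookkeeping rather than conceptual: ensuring that in the $C_5$ case each rotated triple genuinely induces a $P_4$ (so that Lemma~\ref{thm:P4} applies verbatim) and that no extra chords are present, which is guaranteed precisely because $H$ is an \emph{induced} subgraph. Once the induced hypothesis pins down the edge set exactly, the contradictions are immediate arithmetic consequences of combining the $P_4$-inequalities with the standing assumption $w_e > 0$ for all $e \in E$. I would therefore present the $C_5$ argument first as the cleaner instance and then dispatch $P_5$ with the two-inequality summation above.
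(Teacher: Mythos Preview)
Your proposal is correct and uses the same key ingredient as the paper, namely extracting two overlapping induced $P_4$'s and invoking Lemma~\ref{thm:P4} to obtain the incompatible inequalities $w_{23}\geq w_{12}+w_{34}$ and $w_{34}\geq w_{23}+w_{45}$. The only difference is organizational: the paper treats $P_5$ and $C_5$ simultaneously by writing $\{12,23,34,45\}\subseteq E(H)$ and observing that $G[\{1,2,3,4\}]$ and $G[\{2,3,4,5\}]$ are induced $P_4$'s in \emph{both} cases (the extra edge $15$ of $C_5$ lies outside these four-vertex sets), so your separate five-rotation argument for $C_5$ is valid but unnecessary.
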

\begin{proof}
Assume to the contrary that $H\subseteq G$ is an induced subgraph isomorphic to $P_5$ or $C_5$.
Without loss of generality, assume that $V(H)=\{1,2,3,4,5\}$ and $\{12,23,34,45\}\subseteq E(H)$.
Notice that both $G[\{1,2,3,4\}]$ and $G[\{2,3,4,5\}]$ are isomorphic to $P_4$.
Lemma \ref{thm:P4} implies that $w_{23}\geq w_{12}+w_{34}$ and $w_{34}\geq w_{23}+w_{45}$, which is absurd.
\end{proof}

A graph is a \emph{co-banner graph} if it is isomorphic to the fifth graph in Figure \ref{fig:obstructions}.

\begin{lemma}
\label{thm:CoBanner}
If $\Gamma_{G}$ is population monotonic, then $G$ is co-banner-free.
\end{lemma}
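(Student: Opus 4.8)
The plan is to argue by contradiction, producing inside an assumed induced co-banner both an induced $P_4$ and an induced paw, and then setting the $P_4$-property (Lemma \ref{thm:P4}) against the Paw-property (Lemma \ref{thm:Paw}) on a commonly shared edge. Structurally, the co-banner is the complement of the banner; concretely it is a triangle with a path of length two attached at one of its vertices. So suppose to the contrary that $H\subseteq G$ is an induced co-banner, and label its vertices so that $\{2,4,5\}$ spans the triangle (edges $24,25,45$) while $5$–$3$–$1$ is the attached path (edges $35,13$); thus $E(H)=\{13,24,25,35,45\}$. The transposition swapping $2$ and $4$ is an automorphism of $H$, so I may assume without loss of generality that $w_{25}\ge w_{45}$.

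First I would isolate the two relevant induced subgraphs. The set $\{1,2,3,5\}$ induces the path $1$–$3$–$5$–$2$, since its only edges in $E(H)$ are $13,35,25$; this is an induced $P_4$ whose middle edge is $35$, so Lemma \ref{thm:P4} gives $w_{35}\ge w_{13}+w_{25}$. The set $\{2,3,4,5\}$ induces a paw, namely the triangle $\{2,4,5\}$ together with the pendant edge $35$ at vertex $5$, since its only edges in $E(H)$ are $24,25,45,35$. Because $w_{25}\ge w_{45}$, the normalization puts us exactly in the setting of Lemma \ref{thm:Paw} with $5$ as the degree-$3$ vertex and $25$ the heavier of its two triangle edges, yielding $w_{25}\ge w_{35}+w_{24}$.

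Combining the two estimates gives $w_{25}\ge w_{35}+w_{24}\ge (w_{13}+w_{25})+w_{24}$, hence $0\ge w_{13}+w_{24}$, which contradicts $w_{13},w_{24}>0$. This completes the contradiction and shows $G$ is co-banner-free.

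The steps involve no genuine obstacle, only careful bookkeeping: one must check that $\{1,2,3,5\}$ and $\{2,3,4,5\}$ really induce a $P_4$ and a paw (i.e.\ that no chord appears, which is immediate from $E(H)$), and that the orientation is chosen so that the \emph{middle} edge $35$ of the $P_4$ is the \emph{same} edge as the pendant edge of the paw. I expect this last coincidence to be the only subtle point, since it is precisely the shared edge $35$ that makes $w_{35}\ge w_{13}+w_{25}$ and $w_{25}\ge w_{35}+w_{24}$ mutually incompatible; had I instead used the other induced $P_4$ on $\{1,3,4,5\}$, the resulting inequality would merely be consistent with $w_{25}\ge w_{45}$ rather than contradictory, so selecting the $P_4$ through vertex $2$ (the endpoint of the heavier triangle edge) is essential.
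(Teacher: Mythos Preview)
Your proof is correct and follows essentially the same approach as the paper's: locate an induced paw and an induced $P_4$ inside the co-banner sharing the edge between the triangle's apex and the first pendant vertex, then play the Paw-property and the $P_4$-property off against each other to force a nonpositive sum of positive weights. The only difference is cosmetic---the paper labels the co-banner so that the triangle is $\{3,4,5\}$ with pendant path $3$--$2$--$1$, giving $w_{34}\ge w_{23}+w_{45}$ from the paw and $w_{23}\ge w_{12}+w_{34}$ from the $P_4$---but this is just your argument under a relabeling of vertices.
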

\begin{proof}
Assume to the contrary that $H\subseteq G$ is an induced subgraph isomorphic to co-banner.
Without loss of generality, assume that $V(H)=\{1,2,3,4,5\}$ and $E(H)=\{12,23,34,35,45\}$.
By symmetry, further assume that $w_{34}\geq w_{35}$.
Since $G[\{2,3,4,5\}]$ is isomorphic to paw, Lemma \ref{thm:Paw} implies that $w_{34}\geq w_{23}+w_{45}$.
Besides, $G[\{1,2,3,4\}]$ is isomorphic to $P_4$.
Lemma \ref{thm:P4} implies that $w_{23}\geq w_{12}+w_{34}$.
A contradiction occurs.
\end{proof}

A graph is a \emph{butterfly graph} if it is isomorphic to the sixth graph in Figure \ref{fig:obstructions}.

\begin{lemma}
\label{thm:Butterfly}
If $\Gamma_{G}$ is population monotonic, then $G$ is butterfly-free.
\end{lemma}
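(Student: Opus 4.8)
A butterfly graph consists of two triangles sharing a single vertex. The plan is to label the shared (central) vertex and argue that the two triangles impose contradictory constraints on the edge weights via the structural lemmas already established, in particular the Paw-property (Lemma~\ref{thm:Paw}) and the $K_3$-property (Lemma~\ref{thm:K3}). I would first fix notation: assume to the contrary that $H\subseteq G$ is an induced subgraph isomorphic to the butterfly, with $V(H)=\{1,2,3,4,5\}$ and the two triangles being $\{1,2,3\}$ and $\{3,4,5\}$, so that $3$ is the central vertex and $E(H)=\{12,13,23,34,35,45\}$.

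The key observation is that vertex $3$ is incident to edges in \emph{both} triangles, so $3$ is a non-maximal endpoint in at least one of them unless the maximum-weight edges of both triangles meet at $3$. Recall that Lemma~\ref{thm:Paw} (applied through the paw structure) says that for a $K_3$-subgraph the only vertices that may be incident to edges outside the triangle are the two endpoints of its maximum-weight edge. Since $3$ is incident to the edges $34,35$ lying outside triangle $\{1,2,3\}$, vertex $3$ must be an endpoint of the maximum-weight edge of triangle $\{1,2,3\}$; by the same token, since $3$ is incident to $12,13,23$ wait---rather, since $3$ is incident to edges $12$? no, $12$ is inside. The correct phrasing: since $3$ is incident to edges outside triangle $\{3,4,5\}$ (namely $13,23$), vertex $3$ must be an endpoint of the maximum-weight edge of triangle $\{3,4,5\}$ as well. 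Thus in each triangle the heaviest edge contains the central vertex~$3$.

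Having forced this, I would next invoke the $K_3$-property on each triangle. Say, after relabelling within each triangle, $w_{13}=\max\{w_{12},w_{13},w_{23}\}$ and $w_{34}=\max\{w_{34},w_{35},w_{45}\}$; Lemma~\ref{thm:K3} then yields $w_{13}\ge w_{12}+w_{23}$ and $w_{34}\ge w_{35}+w_{45}$. The final step is to extract a numerical contradiction from a cover of a four-vertex subset straddling the two triangles. I would look at an appropriate induced structure on $\{1,2,3,4\}$ or $\{2,3,4,5\}$---for instance, the subgraph $G[\{1,3,4\}]$ (a triangle if $14\in E$, otherwise a path) together with the fact that the heaviest edge of triangle $\{3,4,5\}$ is $34$, and play Lemma~\ref{thm:Paw} or Lemma~\ref{thm:cover}~\emph{\ref{itm:P3K3}} against the conclusion $w_{34}\ge w_{35}+w_{45}$ to force one of the two incident heavy edges at $3$ to dominate the other, contradicting that each is strictly the maximum in its own triangle.

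**Main obstacle.** The delicate point is that the butterfly has \emph{two} edges ($13,23$ and $34,35$) emanating from the central vertex into each triangle, so one must be careful that the ``outside-edge'' argument from Lemma~\ref{thm:Paw} is being applied with the correct orientation and that the induced-subgraph hypothesis (no edges among $\{1,2\}$ and $\{4,5\}$ across the two triangles) is genuinely used. The hard part will be selecting the single four-vertex cover that simultaneously sees the heaviest edge of one triangle and a light edge of the other, so that the two inequalities $w_{13}\ge w_{12}+w_{23}$ and the cover inequality collide; I expect the cleanest route is to apply the Paw-property to the paw induced on one triangle plus a pendant edge at~$3$, concluding a strict inequality that cannot coexist with $3$ being the endpoint of the maximum-weight edge in the \emph{other} triangle.
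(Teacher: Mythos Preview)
Your plan is essentially the paper's approach---reduce to two induced paw subgraphs and apply Lemma~\ref{thm:Paw}---but the final step as you describe it has a gap. A single application of the Paw-property to, say, the paw $G[\{1,3,4,5\}]$ (triangle $\{3,4,5\}$ with pendant $1$ at $3$, under your assumption $w_{34}\ge w_{35}$) yields $w_{34}\ge w_{13}+w_{45}$, hence $w_{34}>w_{13}$. But this does \emph{not} contradict ``$3$ is an endpoint of the maximum-weight edge in the other triangle'': the fact that $w_{13}=\max\{w_{12},w_{13},w_{23}\}$ is perfectly compatible with $w_{34}>w_{13}$. One Paw inequality only orders the two heavy edges at $3$; it does not by itself give a contradiction.

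What is missing is a \emph{second} application of Lemma~\ref{thm:Paw} to the paw on the other side, namely $G[\{1,2,3,4\}]$ (triangle $\{1,2,3\}$ with pendant $4$ at $3$, using $w_{13}\ge w_{23}$), which gives $w_{13}\ge w_{34}+w_{12}$. Combining the two inequalities yields $0\ge w_{12}+w_{45}$, contradicting positivity of the weights. This is exactly what the paper does (with the cosmetic difference that it normalises to $w_{23}\ge w_{13}$ and uses the paws $G[\{1,2,3,4\}]$ and $G[\{2,3,4,5\}]$). Note also that your intermediate invocation of the $K_3$-property is unnecessary: the Paw-property alone already delivers the contradiction once you apply it twice.
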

\begin{proof}
Assume to the contrary that $H\subseteq G$ is an induced subgraph isomorphic to butterfly.
Without loss of generality, assume that $V(H)=\{1,2,3,4,5\}$ and $E(H)=\{12,13,23,34,35,45\}$.
By symmetry, further assume that $w_{23}\geq w_{13}$ and $w_{34}\geq w_{35}$.
Notice that both $G[\{1,2,3,4\}]$ and $G[\{2,3,4,5\}]$ are isomorphic to paw.
Lemma \ref{thm:Paw} implies $w_{23}\geq w_{12}+w_{34}$ and $w_{34}\geq w_{23}+w_{45}$.
A contradiction occurs.
\end{proof}

Now we are ready to characterize population monotonic matching games.

\subsection{A complete characterization for population monotonicity}
\label{subsec:characterization}

A \emph{$k$-star} is a graph comprised of a clique $K$ of size $k$ and an independent set $I$,
where every vertex in $I$ is incident to at least one vertex in $K$.
We call vertices in $K$ \emph{centers} of the $k$-star.
In the following, we concentrate on $k$-stars with $k\leq 2$.
We refer to $1$-star as \emph{star} and to $2$-star as \emph{double-star}. 
Notice that a double-star degenerates to a star if all non-center vertices are connected to only one and the same center.

\begin{figure}[t]%[hbt!]
\vspace{-1em}
\centerline{\includegraphics[width=1.45\textwidth]{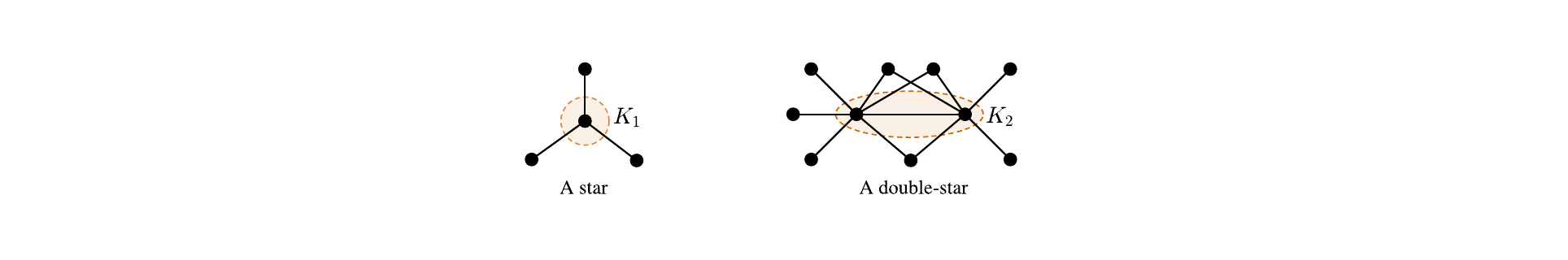}}
\vspace{-2em}
\caption{Examples for $k$-stars}
\label{fig:stars}
\vspace{-1em}
\end{figure}

We call adjacent vertices $u$ and $v$ in $G$ a \emph{dominant pair} if $w_{uv}\geq w_{uu'}+w_{vv'}$ for any pair of edges $uu'$ and $vv'$ incident to $uv$,
where vertices $u'$ and $v'$ might coincide.
In particular, we still call $u$ and $v$ a \emph{dominant pair} if $uv$ is the only edge incident to $v$ and $w_{uv}\geq w_{uu'}$ for any other edge $uu'$ incident to $u$.
Now we are ready to present our main result.

\begin{theorem}
\label{thm:main}
A matching game is population monotonic if and only if every component of the underlying graph is a double-star where the two centers make a dominant pair.
\end{theorem}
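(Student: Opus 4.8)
The plan is to prove the two implications separately, basing sufficiency on a direct construction and necessity on the forbidden-structure and weight lemmas already established. Since a matching game decomposes as a direct sum over the connected components of $G$, for sufficiency it suffices to build a PMAS on each double-star separately and take the union. So let $H$ be a double-star with centers $u,v$ forming a dominant pair. The key observation is that $\{u,v\}$ is a vertex cover of $H$, so any matching of an induced subgraph uses at most one edge at $u$ and one at $v$; combined with the dominant-pair inequality $w_{uv}\ge w_{ua}+w_{vb}$ this yields a completely explicit characteristic function, namely $\gamma(S)=w_{uv}$ whenever $\{u,v\}\subseteq S$, $\gamma(S)=\max\{w_{ua}:a\in S,\ a\sim u\}$ when $u\in S$ and $v\notin S$, and symmetrically for $v$. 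I would then fix any $\alpha$ in the interval $[\max_a w_{ua},\ w_{uv}-\max_b w_{vb}]$, which is nonempty \emph{precisely} because $(u,v)$ is dominant, and define the scheme by paying $u$ the amount $\alpha$ and $v$ the amount $w_{uv}-\alpha$ on every coalition containing both centers, paying the unique present center its value $\gamma(S)$ on coalitions containing exactly one center, and paying every leaf $0$ throughout. Efficiency is immediate, and monotonicity reduces to the centers: a center's share is a nondecreasing maximum while its partner is absent and then jumps to $\alpha$ (resp.\ $w_{uv}-\alpha$), which dominates that maximum by the choice of $\alpha$.

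For necessity, assume $\Gamma_G$ is population monotonic, fix a component $H$, and let $uv$ be an edge of maximum weight in $H$; I will show that $\{u,v\}$ are the centers of a double-star. First, every vertex is adjacent to $u$ or $v$: if some $z$ were adjacent to neither, then connectivity together with $(P_5,C_5)$-freeness (Lemma~\ref{thm:P5C5}) would place $z$ at distance two from a center, say through a common neighbour $p$ with $p\sim u$; if $p\sim v$ then the triangle $uvp$ has maximum edge $uv$ and Lemma~\ref{thm:Paw} forbids $p$ from having the outside neighbour $z$, while if $p\not\sim v$ then $v\text{-}u\text{-}p\text{-}z$ is an induced $P_4$ and Lemma~\ref{thm:P4} forces $w_{up}\ge w_{uv}+w_{pz}>w_{uv}$, contradicting maximality of $uv$. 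Second, no edge joins two non-centers: given such an edge $xy$, I classify each of $x,y$ according to whether it is adjacent to $u$ only, to $v$ only, or to both. The mixed ``only'' case produces an induced $C_4$ (Lemma~\ref{thm:C4}); the ``both/both'' case produces a $K_4$ (Lemma~\ref{thm:K4}); two same-side ``only'' neighbours produce a paw in which the pendant $u$ would have to sit at the maximum edge of the triangle $uxy$, so Lemma~\ref{thm:Paw} forces $w_{ux}\ge w_{uv}+w_{xy}>w_{uv}$; and an ``only''/``both'' pair produces a diamond whose central edge $uy$ would satisfy $w_{uy}\ge w_{uv}+w_{vy}>w_{uv}$ by Lemma~\ref{thm:Diamond}. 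Each case contradicts a forbidden structure or the maximality of $uv$, so $\{u,v\}$ is a vertex cover, and by domination $H$ is a double-star with centers $u,v$.

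It remains to see that $(u,v)$ is a dominant pair, and here the cover inequality \eqref{eq:4cover} does everything at once: for any neighbour $a$ of $u$ and any neighbour $b\neq a$ of $v$, instantiating \eqref{eq:4cover} with $(1,2,3,4)=(a,u,v,b)$ gives $\gamma(\{a,u,v\})+\gamma(\{u,v,b\})\ge w_{ua}+w_{uv}+w_{vb}$; since $uv$ is a maximum-weight edge, each of $\{a,u,v\}$ and $\{u,v,b\}$ has matching value exactly $w_{uv}$, so the left side equals $2w_{uv}$ and we obtain $w_{uv}\ge w_{ua}+w_{vb}$, with the coincident case $a=b$ being Lemma~\ref{thm:K3} and the degree-one case immediate from maximality. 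I expect the two structural steps of necessity --- the domination argument and the vertex-cover case analysis --- to be the delicate part, since each adjacency pattern must be matched to exactly the right forbidden subgraph or weight lemma and it is the heaviest edge $uv$ that converts the weight inequalities into contradictions. By contrast, once the double-star structure is in hand both the dominant-pair inequality and the entire sufficiency construction are essentially forced, as they hinge on the single fact that $\{u,v\}$ is a vertex cover, so that matchings are governed entirely by the edge $uv$ against the two heaviest edges at $u$ and at $v$.
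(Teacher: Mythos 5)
Your proposal is correct, but the necessity half follows a genuinely different route from the paper's. The sufficiency construction is essentially the paper's (the paper splits $w_{uv}$ between the centers in the fixed proportion $\sigma_u:\sigma_v$ rather than picking an arbitrary $\alpha\in[\sigma_u,\,w_{uv}-\sigma_v]$; both hinge on the same inequality $w_{uv}\ge\sigma_u+\sigma_v$, and your version even sidesteps the degenerate case $\sigma_u=\sigma_v=0$). For necessity, however, the paper first splits on whether the component is a tree and, in the non-tree case, runs an involved analysis of the triangle structure: any two non-disjoint $K_3$-subgraphs share an edge (via Lemmas \ref{thm:C4}, \ref{thm:K4}, \ref{thm:Diamond}, \ref{thm:Butterfly}), no two $K_3$-subgraphs are disjoint (via Lemma \ref{thm:CoBanner}), hence all triangles share one heaviest edge, whose endpoints are then shown to be the centers. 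You instead anchor everything on a maximum-weight edge $uv$ of the component and prove directly that $\{u,v\}$ dominates and covers all edges, matching each adjacency pattern of a hypothetical offending vertex or edge to one of Lemmas \ref{thm:C4}, \ref{thm:K4}, \ref{thm:P4}, \ref{thm:Paw}, \ref{thm:Diamond}, with maximality of $w_{uv}$ and positivity of weights turning each inequality into a contradiction; your final appeal to \eqref{eq:4cover} with $(1,2,3,4)=(a,u,v,b)$ is legitimate since that inequality needs only $au,uv,vb\in E$ and the PMAS axioms. This is shorter, avoids the tree/non-tree dichotomy and Lemmas \ref{thm:CoBanner} and \ref{thm:Butterfly} altogether, and identifies the centers up front. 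One small repair: in the domination step, a given $z$ non-adjacent to both centers need not itself lie at distance two from $\{u,v\}$ --- choose such a $z$ minimizing its distance to $\{u,v\}$ (its neighbour on a shortest path would otherwise be a closer violator), which is a minimality argument rather than a consequence of $(P_5,C_5)$-freeness; the rest of your case analysis then goes through verbatim.
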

\begin{proof}
Let $\Gamma_G=(N,\gamma)$ be the matching game on $G=(V,E;w)$, where $G$ a simple graph with weight function $w:E\rightarrow \mathbb{R}_+$.
Without loss of generality, assume that $G$ is connected.

$(\Longrightarrow)$.
Assume that $\Gamma_G$ is population monotonic.
We distinguish two cases to show that $G$ is a double-star where the two centers make a dominant pair.

Case $1$: $G$ is a tree.
By Lemma \ref{thm:P5C5}, $G$ is a double-star where every non-center vertex is adjacent to precisely one center.
Let $u$ and $v$ be the two centers of $G$.
In particular, let $u$ and $v$ be the endpoints of a maximum weight edge in $G$ if $G$ is a star.
Lemma \ref{thm:P4} implies that $u$ and $v$ make a dominant pair in $G$. 

Case $2$: $G$ is not a tree.
Lemma \ref{thm:P5C5} suggests that $G$ contains $K_3$-subgraphs.
We call two $K_3$-subgraphs of $G$ \emph{disjoint} if they share no common vertex and \emph{non-disjoint} otherwise.

We first show that any two non-disjoint $K_3$-subgraphs of $G$ share a common edge.
Moreover, the common edge is the maximum weight edge in either $K_3$-subgraph.
Assume to the contrary that $H_1$ and $H_2$ are two non-disjoint $K_3$-subgraphs without common edges.
Without loss of generality, assume that $V(H_1)=\{1,2,3\}$ and $V(H_2)=\{3,4,5\}$.
Clearly, $H_1 \cup H_2$ is a butterfly graph.
Lemma \ref{thm:Butterfly} implies that there exist crossing edges in $G$ between vertex sets $\{1,2\}$ and $\{4,5\}$.
Lemmas \ref{thm:C4} and \ref{thm:K4} suggest that there exists at most one crossing edge in $G$ between $\{1,2\}$ and $\{4,5\}$.
We may assume that $14$ is the only crossing edge between $\{1,2\}$ and $\{4,5\}$.
Notice that both $G[\{1,2,3,4\}]$ and $G[\{1,3,4,5\}]$ are isomorphic to diamond.
Lemma \ref{thm:Diamond} implies that $w_{13}\geq w_{14}+w_{34}$ and $w_{34}\geq w_{13}+w_{14}$, which is absurd.
Hence any two non-disjoint $K_3$-subgraphs of $G$ share a common edge.
Moreover, Lemma \ref{thm:Diamond} implies that the common edge is the maximum weight edge in either $K_3$-subgraph.

Next we show that all $K_3$-subgraphs of $G$ are pairwise non-disjoint.
Moreover, all $K_3$-subgraphs of $G$ share a common edge which is the maximum weight edge in every $K_3$-subgraph.
Assume to the contrary that there exist disjoint $K_3$-subgraphs in $G$.
Then Lemma \ref{thm:P5C5} guarantees that there are disjoint $K_3$-subgraphs $H_1$ and $H_2$ with crossing edges between $V(H_1)$ and $V(H_2)$.
Moreover, Lemmas \ref{thm:C4} and \ref{thm:Butterfly} suggest that there exists at most one crossing edge between $V(H_1)$ and $V(H_2)$.
Without loss generality, assume that $V(H_1)=\{1,2,3\}$, $V(H_2)=\{4,5,6\}$ and $14$ is the only crossing edge between $V(H_1)$ and $V(H_2)$.
However, $G[\{1,2,3,4,5\}]$ is isomorphic to co-banner, which contradicts Lemma \ref{thm:CoBanner}.
Hence all $K_3$-subgraphs of $G$ are pairwise non-disjoint.
Recall that any two non-disjoint $K_3$-subgraphs share a common edge which is the maximum weight edge in both of them.
Moreover, Lemma \ref{thm:K3} implies that the maximum weight edge is unique in every $K_3$-subgraph.
Hence all $K_3$-subgraphs of $G$ share a common edge which is the maximum weight edge in every $K_3$-subgraph.

Finally, we show that $G$ is a double-star where the two centers are the endpoints of the common edge for all $K_3$-subgraphs in $G$.
Moreover, the two centers of $G$ make a dominant pair.
Let $uv$ be the common edge of all $K_3$-subgraphs in $G$.
In particular, if there is only one $K_3$-subgraph in $G$, let $uv$ be the edge with maximum weight in the $K_3$-subgraph.
Thus $uv$ is the maximum weight edge in every $K_3$-subgraph of $G$.
Lemmas \ref{thm:Paw} and \ref{thm:CoBanner} suggest that vertices outside $K_3$-subgraphs of $G$ are adjacent to either $u$ or $v$.
It follows that $G$ is a double-star with centers $u$ and $v$.
Moreover, Lemmas \ref{thm:P4} - \ref{thm:Diamond} imply that $u$ and $v$ make a dominant pair.

$(\Longleftarrow)$.
Assume that $G$ is a double-star where the two centers make a dominant pair.
Let $u$ and $v$ be the two centers of $G$.
In particular, let $u$ and $v$ be the endpoints of a maximum weight edge if $G$ is a star.
We prove the matching game $\Gamma_G$ on $G$ is population monotonic by constructing a PMAS $\mathcal{X}=(\boldsymbol{x}_S)_{S\in 2^N \backslash \{\emptyset\}}$.
Let $S$ be a nonempty subset of $N$.
Define $\boldsymbol{x}_{S}=(x_{S,i})_{i\in S}$ as follows.
For any $i\in S\backslash \{u,v\}$, let $x_{S,i}=0$.
For values of $x_{S,u}$ and $x_{S,v}$, we distinguish two cases.
$(1)$ If $S\cap \{u,v\}=\{i\}$, then $x_{S,i}=\sigma_{i} (S)$,
where $\sigma_{i}(S)$ denotes the maximum weight of edges incident to $i$ in $G[S]-uv$.
$(2)$ If $S\cap \{u,v\} = \{u,v\}$,
then let $x_{S,u}=\frac{\sigma_u}{\sigma_u + \sigma_v} w_{uv}$ and $x_{S,v}=\frac{\sigma_v}{\sigma_u + \sigma_v} w_{uv}$,
where $\sigma_{u}$ and $\sigma_{v}$ denote the maximum weight of edges incident to $u$ and $v$ in $G-uv$, respectively.
It remains to show that $\mathcal{X}=(\boldsymbol{x}_S)_{S\in 2^N\backslash \{\emptyset\}}$ defined above is indeed a PMAS for $\Gamma_G$.

We first consider the efficiency condition.
Let $S$ be a nonempty subset of $N$.
Since $G$ is a double-star with $u$ and $v$ being a dominant pair,
we have
\begin{equation*}
\gamma (S)=
\begin{cases}
~w_{uv} & \text{if $u,v\in S$,}
\vspace{1.5 mm}\\
\sigma_u (S) & \text{if $u\in S$, $v\not\in S$,}
\vspace{1.5 mm}\\
\sigma_v (S) & \text{if $u\not\in S$, $v\in S$,}
\vspace{1.5 mm}\\
\quad 0 & \text{if $u,v\not\in S$.}
\end{cases}
\end{equation*}
Hence $\sum_{i\in S} x_{S,i}=\gamma (S)$ holds in any case.

Now we check the monotonicity condition.
Let $S$ and $T$ be two subsets of $N$ with $S\subseteq T$.
It suffices to show that $x_{S,i}\leq x_{T,i}$ for any $i\in S\cap \{u,v\}$.
Without loss of generality, assume that $u\in S$.
We proceed by distinguishing three cases.
$(1)$ If $v\in S$, then $x_{S,u}=\frac{\sigma_{u}}{\sigma_{u} + \sigma_{v}} w_{uv} =x_{T,u}$.
$(2)$ If $v\in T\backslash S$, then $x_{S,u}=\sigma_u (S)\leq \sigma_u \leq \frac{\sigma_u}{\sigma_u +\sigma_v} w_{uv}=x_{T,u}$, since $w_{uv}\geq \sigma_u+\sigma_v$.
$(3)$ If $v\not\in T$, then $x_{S,u}=\sigma_u (S)\leq \sigma_u (T)=x_{T,u}$.
Hence $x_{S,u}\leq x_{T,u}$ holds in any case.
By symmetry, we have $x_{S,i}\leq x_{T,i}$ for any $i\in S\cap \{u,v\}$.

Therefore, the matching game $\Gamma_G$ on $G$ is population monotonic.
\end{proof}

Graphs that are double-stars can be determined efficiently.
Indeed, the centers in a double-star are the only possible vertices with degree larger than $2$,
and all non-center vertices make an independent set.
Besides, dominant pairs in double-stars can be verified efficiently.
Hence we have the following corollary for our main result.

\begin{corollary}
The population monotonicity of a matching game can be determined efficiently.
\end{corollary}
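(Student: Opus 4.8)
The plan is to read off the corollary directly from the structural characterization of Theorem~\ref{thm:main}, which converts the analytic question ``is $\Gamma_G$ population monotonic?'' into a purely combinatorial recognition problem. By that theorem, $\Gamma_G$ is population monotonic if and only if every connected component of $G$ is a double-star whose two centers form a dominant pair. Since the connected components of $G=(V,E;w)$ are computable in linear time by a single graph traversal, it suffices to exhibit a polynomial-time test for an individual connected graph and to run it on each component, accepting exactly when all components pass.

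For a fixed component $H$ I would first recognize whether $H$ is a double-star and, if so, locate admissible centers. The decisive observation, already noted before the statement, is that in any double-star every non-center vertex has degree at most $2$ and the non-center vertices form an independent set; consequently every vertex of degree at least $3$ must be a center, so $H$ has at most two such vertices. When $H$ has a vertex of degree at least $3$, this pins down the candidate center set to at most two vertices, and I would simply verify that they are adjacent, that their deletion leaves an independent set, and that every remaining vertex is adjacent to one of them. To sidestep the bookkeeping in the remaining, more symmetric situations, it is cleanest to allow enumeration of candidate centers: there are at most $|E(H)|$ adjacent pairs $\{u,v\}$, and for each one the double-star property is verified by a single scan of the incidence structure, so the recognition step is polynomial regardless of which case arises.

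It remains to test the dominant-pair condition for a candidate center pair $\{u,v\}$. By definition $u$ and $v$ form a dominant pair exactly when $w_{uv}\geq w_{uu'}+w_{vv'}$ for all edges $uu',vv'$ incident to $uv$, which is equivalent to the single inequality $w_{uv}\geq \sigma_u+\sigma_v$, where $\sigma_u$ and $\sigma_v$ denote the maximum weights of edges incident to $u$ and to $v$ in $G-uv$; in the degenerate star case it reduces to requiring $uv$ to be a maximum-weight edge at its center. Both quantities $\sigma_u,\sigma_v$ are obtained by scanning the edges at $u$ and at $v$, so this test is again polynomial. Combining the componentwise recognition with the weight test yields a polynomial-time decision procedure for population monotonicity.

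The argument is essentially bookkeeping once Theorem~\ref{thm:main} is available, so I do not anticipate a substantive obstacle. The one point that requires care is the identification of the centers in degenerate components---stars, and small components such as paths and triangles in which no vertex has degree exceeding $2$---where the center pair is not forced by degree alone. Enumerating the at most $|E(H)|$ candidate adjacent pairs and testing each, rather than trying to read off the centers directly, is the safest way to handle these cases uniformly and keeps the whole procedure polynomial.
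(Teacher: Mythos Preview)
Your proposal is correct and follows essentially the same approach as the paper: reduce via Theorem~\ref{thm:main} to recognizing double-stars (using the observation that only centers can have degree exceeding $2$ and non-centers form an independent set) and then verify the dominant-pair inequality $w_{uv}\geq\sigma_u+\sigma_v$. Your treatment is more careful than the paper's brief remark in that you explicitly handle the degenerate low-degree components by enumerating candidate center pairs, but this is an elaboration of the same idea rather than a different route.
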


\section{Discussion}
\label{sec:disscussion}
In this paper, we study matching games and provide a necessary and sufficient characterization for the population monotonicity.
Prior to our work, Immorlica et al. \cite{IMM08} studied the limitation of approximate PMASes and proved that no constant approximation factor exists even for simple matching games.
Hence our result completes the line of research on PMASes for matching games.

One possible working direction for matching games is to study the existence of allocation schemes consisting of core allocations for every coalition,
where Deng et al. \cite{DINZ00} have achieved a sufficient characterization for simple matching games.
Our result might be helpful since a PMAS provides a core allocation for every coalition in a population monotonic way.
Besides, variants of matching games introduced by Kumabe and Maehara \cite{KM20-IJCAI, KM20-AAMAS} are also worth studying.

%\section*{Acknowledgments}
%We appreciate an associate editor and two anonymous referees for their insightful and detailed referee reports which greatly improved the presentation of this work.

\bibliographystyle{habbrv}
\bibliography{reference}
\end{document}